\newtheorem{proposition}{Proposition}
\newtheorem{theorem}{Theorem}
\newcommand{\ep}{\hfill $\Box$}
\newcommand{\cond}{\,\vert\,}
\newcommand{\defeq}{\triangleq}
\newcommand{\indic}{\mathbf{1}}
\newfont{\bbb}{msbm10 scaled 500}
\newfont{\bb}{msbm10 scaled 1100}
\newcommand{\CC}{\mbox{\bb C}}
\newcommand{\RR}{\mbox{\bb R}}
\newcommand{\EE}{\mbox{\bb E}}
\newcommand{\ev}{{\bf e}}
\newcommand{\hv}{{\bf h}}
\newcommand{\rv}{{\bf r}}
\newcommand{\Rm}{{\bf R}}
\newcommand{\Jc}{{\cal J}}
\newcommand{\Kc}{{\cal K}}
\newcommand{\Nc}{{\cal N}}
\newcommand{\alphav}{\hbox{\boldmath$\alpha$}}
\newcommand{\etav}{\hbox{\boldmath$\eta$}}
\newcommand{\piv}{\hbox{\boldmath$\pi$}}
\renewcommand{\arg}{{\hbox{arg}}}
\newcommand{\eqdef}{\stackrel{\Delta}{=}}
\DeclareFontFamily{U}{cmfi}{}
\DeclareFontShape{U}{cmfi}{m}{n}{ <-> cmfi10 }{}
\DeclareSymbolFont{CMFI}{U}{cmfi}{m}{n}
\renewcommand{\Rm}{\pmb{R}}
\renewcommand{\ev}{\pmb{e}}
\renewcommand{\hv}{\pmb{h}}
\renewcommand{\rv}{\pmb{r}}
\newcommand{\Rbl}{R_{\rm sum,bl}}	
\newcommand{\Rsp}{R_{\rm sum,sp}}	
\newcommand{\Rsc}{R_{\rm sum,sc}}	
\newif\ifarxiv
\begin{document}

\title{Opportunistic Content Delivery in Fading Broadcast Channels}

\author{\IEEEauthorblockN{}
\IEEEauthorblockA{A. Ghorbel, K. H. Ngo, R. Combes, M. Kobayashi, and S. Yang \\
LSS, CentraleSup\'elec \\
Gif-sur-Yvette, France\\
 {\tt \{firstname.lastname\}@centralesupelec.fr}
}
\thanks{This work was supported by Huawei Technologies France SASU.}
}

\maketitle
\begin{abstract}
We consider content delivery over fading broadcast channels. A server wants to transmit $K$ files to $K$ users, each equipped with a cache of finite size. Using the coded caching scheme of Maddah-Ali and Niesen, we design an opportunistic delivery scheme where the long-term sum content delivery rate scales with $K$ the number of users in the system. 
The proposed delivery scheme combines superposition coding together with appropriate power allocation across sub-files intended to different subsets of users. We analyze the long-term average sum content delivery rate achieved by two special cases of our scheme: a) a selection scheme that chooses the subset of users with the largest weighted rate, and b) a baseline scheme that transmits to $K$ users using the scheme of Maddah-Ali and Niesen. We prove that coded caching with appropriate user selection is scalable since it yields a linear increase of the average sum content delivery rate. 
\end{abstract}

\section{Introduction}\label{section:introduction}
Content delivery applications such as video streaming are envisionned to represent nearly 75\% of the mobile data traffic by 2020. 
The skewness of the video traffic together with the ever-growing cheap on-board storage memory suggests
that the quality of experience can be improved by caching popular content close to the end-users in wireless networks. 
Recent works have studied the gains provided by caching under various models and assumptions (see e.g. \cite{maddah2013fundamental,golrezaei2011femtocaching} and references therein). 
In this work, we consider content delivery using coded caching where a server is connected to $K$ users each equipped with a cache of finite memory \cite{maddah2013fundamental}. A striking result of \cite{maddah2013fundamental} is that the total number of multicast transmissions to satisfy $K$ distinct requests converges to a constant in the regime of a large $K$, thus yielding a scalable system. 

Substantial effort have been devoted to quantify the gains of coded caching in more realistic scenarios (see e.g. \cite[Section VIII]{maddah2013fundamental}, \cite{misconceptions}). In particular, some authors have studied coded caching over wireless channels by relaxing the initial 
assumption of a perfect shared link between the server and users  \cite{bidokhti2016noisy,NgoAllerton2016,zhang2016wireless,Shariatpanahi2016multiserver,Shariatpanahi2016multiantenna}.
It is noted that the performance of coded caching strongly depends on the multicast rate of the underlying wireless channels and the latter is limited by the user in the worst channel condition. Such limitation has been highlighted in~\cite{NgoAllerton2016} which shows that the sum content delivery rate is no longer scalable, if the multicast rate vanishes when $K \to \infty$. This is typically the case for the i.i.d. Rayleigh fading channels (i.i.d. across users and time) \cite{jindal2006capacity}.

In a more realistic scenario where users have asymmetric fading statistics (e.g. in a cellular system), the performance degradation becomes substantial in the sense that most of the resources are allocated to users with low channel quality. To overcome these drawbacks, schemes using multiple antennas \cite{Shariatpanahi2016multiserver,NgoAllerton2016,Shariatpanahi2016multiantenna} and interference management techniques \cite{bidokhti2016noisy,zhang2016wireless} have been proposed. In this work, we take a different approach based on user scheduling in order to address the following fundamental question that has been overlooked in existing works: {\it how to exploit the wireless channels opportunistically for content delivery?} 

To answer this question, we consider the $K$-user Gaussian fading broadcast channel with $2^K-1$ independent messages, each intended to a subset of users, and solve the weighted sum rate maximization in section~\ref{section:formulation}. The optimal strategy combines superposition coding with an appropriate power allocation across different messages. The solution at hand can be applied to various communication contexts such as a queued content delivery network \cite{SubmittedWiopt2016}. We apply this solution to maximize the sum content delivery rate, assuming that content placement is performed by existing schemes \cite{maddah2013fundamental,maddah2013decentralized}. 
We analyze the performance of our scheme in two special cases of interest: a) a selection scheme that chooses the subset of users with the largest instantaneous weighted rate, and b) a baseline scheme that applies coded caching to $K$ users. 
We prove that the selection scheme achieves a linear increase of the average sum content delivery rate in the regime of a large $K$ thus yields a scalable solution. On the other hand, both the baseline and the selection schemes achieve the same sum delivery rate in the high SNR regime, since it is nearly optimal to perform coded caching over all $K$ users in this regime.  Moreover, we provide a simple threshold-based feedback scheme which yields the same performance as the selection scheme in the large $K$ regime, while requiring each user to feedback only one bit rather than its channel state information. Numerical examples in Section \ref{sec:numericalEx} show that the linear gain in sum content delivery rate occurs even for relatively small number of users. 
 \ifarxiv
	Proofs of Theorem 1, Propositions 1 and 2 are presented in Appendix.
\else
	Due to lack of space, we provide only sketches of proof, and complete proofs can be found in the technical report~\cite{techreport}.  
\fi

We use the following notation: $[k] =\{1, \dots, k\}$, and $f(x)\sim g(x)$ if $\lim_{x\rightarrow\infty}\frac{f(x)}{g(x)}=1$.

\section{System Model}\label{section:model}

We consider a content delivery system where a server with $N$ files wishes to transmit $K$ requested files to $K$ users over a wireless downlink channel. We assume that $N$ files are of equal size of $F$ bits and equal popularity, while each user has a cache of size $M F$ bits, where $M \ge 1$ denotes the cache size measured in files. We define $m$ the normalized cache size denoted by $m=M/N$.
Each user can store any part of any file in her cache, by prefetching them during off-peak hours, prior to the actual request, according to centralized or decentralized placement strategies proposed in the literature. 

In the decentralized placement of \cite{maddah2013decentralized}, each user independently caches a subset of $mF$ bits of file $i$, chosen uniformly at random for $i=1,\dots, N$ under the memory constraint of $MF$ bits. By letting $W_{i|\Jc}$ denote the sub-file of $W_i$ stored exclusively in the cache memories of the user set $\Jc$, the cache memory $Z_k$ of user $k$ after decentralized placement is given by
\begin{align} \label{eq:Zk}
Z_k =\{ W_{i \cond \Jc}:  \;\; \forall \Jc \subseteq[K], \forall \Jc \ni k , \forall i \in [N] \}.
\end{align}
In the centralized cache placement \cite{maddah2013fundamental}, each file is split into ${K \choose b}$ disjoint sub-files of
equal size, where $b\defeq\lfloor m K\rfloor$. Each sub-file is cached by users in a subset $\Jc$ of cardinality $|\Jc|=b$. 
The resulting cache memory $Z_k$ is the same as \eqref{eq:Zk} except that the subsets are now restricted to those with a specific cardinality $b$. Once the requests from users are revealed, the server generates and sequentially conveys the codewords intended to each subset of users. Namely, assuming that user $k$ requests file $k$ for all $k$, the codeword intended to the subset $\Jc$ is given by
\begin{align}\label{eq:V}
V_{\Jc}=\oplus_{k\in \Jc}W_{k|\Jc\setminus\{k\}},
\end{align}
where $\oplus$ denotes the bit-wise XOR operation. The main idea here is to create a codeword useful to a subset of users by exploiting the receiver side information established during the placement phase. 

It has been shown in \cite{maddah2013fundamental,maddah2013decentralized} that the total number of 
multicast transmissions needed to satisfy $K$ {\it distinct} demands over the error-free shared link is as follows.  
\begin{align} \label{eq:defT}
T(m,K)= \begin{cases}  
\left(1-m\right)
\frac{1}{1/K+m}, &\text{centralized caching}, \\ 
\left(1-m\right) 
\frac{1-(1-m)^K}{m}, &\text{decentralized caching}.
\end{cases} 
\end{align}

In the physical layer, we consider the quasi-static Rayleigh fading broadcast channel. The output of user $k$ at channel use $t$ is given by
\begin{align}\label{eq:FadingBC}
y_k[t] = \sqrt{h_k} x[t] + w_k[t],
\end{align}
where $x$ is the input symbol satisfying the power constraint $\frac{1}{n}\sum_{t=1}^{n}|x[t]|^2\leq P$; $\{h_k\}$ are the fading gains, independently and exponentially distributed $\sim $ Exp$(1/\gamma_k)$ with mean $\gamma_k$; $w_k(t)\sim \Nc_{\CC} (0, 1)$ is additive white Gaussian noise assumed independent between users. We assume that $\{h_k\}$ are known by the server and all users. 

It is well-known that the multicast capacity of the channel at hand, or the common message rate,  is given by 
\begin{align} \label{eq:multicast_rate}
R_{\rm mc} (\hv) = \log\left(1+ P \min_{j\in [K]}  h_j\right)
\end{align}
and is limited by the user in the worst fading condition. It has been proved in \cite{NgoAllerton2016} that such limitation is detrimental for a scalable content delivey network. 

To see this, let us first define the sum content delivery rate when coded caching is applied directly to the fading broadcast channel. 
In order to satisfy the distinct demands from $K$ users, that is to \emph{complete} the transfer of $K F$ demanded bits, 
one needs to send $T(m,K) F$ bits over the wireless link. The corresponding transmission takes $\frac{T(m,K) F}{ R_{\rm mc}(\hv)} $ units of time. As a result, the sum content delivery rate of a naive application of coded caching 
for a given channel realization $\hv$ is given by 
\begin{align*} 
 \frac{K}{T(m,K)} R_{\rm mc}(\hv)
\end{align*}%
measured in [nats/second/Hz]. We call this scheme the ``baseline'' scheme, and its long-term average sum content delivery rate is
\begin{align} \label{eq:baseline_rate}
\Rbl (K)=\frac{K}{T(m,K)} \EE[R_{\rm mc}(\hv)].
\end{align}
In the case of symmetric fading statistics ($\gamma_k=1, \forall k$), since the average multicast capacity vanishes as $1/K$ for a large $K$\cite{jindal2006capacity}, the average sum content delivery rate converges to a constant, yielding a non-scalable system. This negative result calls for a careful design of content delivery that benefits from the time varying nature of the underlying fading broadcast channel. 

\section{Problem Formulation}\label{section:formulation}
In this section, we study the fading Gaussian broadcast channel where the transmitter wishes to convey $2^K-1$ mutually independent messages, each intended to a subset of users. We characterize the capacity region of these messages and then solve explicitly the weighted sum rate maximization problem. We show that this formulation allows to maximize the content delivery rate by opportunistically exploiting the wireless channel. 

\subsection{Broadcasting private and multiple common messages}
We start by observing that the channel at hand in \eqref{eq:FadingBC} for a given channel realization $\hv$ corresponds to a stochastically degraded Gaussian broadcast channel. Without loss of generality, let us assume $h_1\geq \dots \geq h_K$. 
The capacity region of the degraded broadcast channel for $K$ private messages and a common message is well-known \cite{el2011network}.
 In this section, we consider a more general setup where the transmitter wishes to convey 
$2^K-1$ mutually independent messages, denoted by $\{M_{\Jc}\}$, where $M_{\Jc}$ denotes the message intended to the users in subset $\Jc\subseteq [K]$. Each user $k$ must decode all messages $\{M_{\Jc}\}$ for $\Jc\ni k$. By letting $R_{\Jc}$ denote the multicast rate of the message $M_{\Jc}$, we say that the rate-tuple $\Rm\in \RR_+^{2^K-1}$ is achievable if there exists encoding and decoding functions which guarantee a rate greater than $\Rm$. The capacity region is defined as the supremum of the achievable rate-tuple. Then we have the following result.   
\begin{theorem}\label{theorem:Region}
The capacity region $\Gamma(\hv)$ of a $K$-user degraded Gaussian broadcast channel with fading gains $h_1 \geq \dots \geq h_K$ and $2^K-1$ independent messages $\{M_{\Jc}\}$ is given by 
\begin{align}
R_1 & \leq \log(1+ h_1\alpha_1 P) \\ \label{eq:9}
\sum_{\Kc: k\in \Kc \subseteq [k]} R_{\Kc} & \leq \log\frac{1+ h_k \sum_{j=1}^{k} \alpha_j P}{1+ h_k\sum_{j=1}^{k-1} \alpha_j P} \;\;\; k=2, \dots, K \end{align}
for non-negative variables $\{\alpha_k\}$ such that $\sum_{k=1}^K \alpha_k \leq 1$. 
\end{theorem}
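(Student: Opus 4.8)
The plan is to reduce this $(2^K{-}1)$-message problem to the classical $K$-private-message degraded Gaussian broadcast channel by \emph{bundling} messages according to the index of their weakest intended user. Having ordered $h_1 \ge \cdots \ge h_K$, the channel is stochastically degraded with $Y_1$ least noisy. For $k \in [K]$ define the level-$k$ super-message $\tilde M_k \defeq (M_\Jc : \max \Jc = k)$, and observe that the index set $\{\Kc : k \in \Kc \subseteq [k]\}$ appearing in \eqref{eq:9} is exactly $\{\Kc : \max\Kc = k\}$; hence the left-hand side of the $k$-th inequality is the total level-$k$ rate $\tilde R_k \defeq \sum_{\Jc:\max\Jc = k} R_\Jc$ (for $k=1$ this is just $R_1$, and the empty inner sum makes the bound $\log(1+h_1\alpha_1 P)$, so the first displayed inequality of Theorem~\ref{theorem:Region} is the $k=1$ case of the general formula). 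I claim that $\Rm \in \Gamma(\hv)$ if and only if $(\tilde R_1,\dots,\tilde R_K)$ lies in the private-message capacity region of the degraded Gaussian BC in which user $k$ requests only $\tilde M_k$ — a region which is exactly the one described in Theorem~\ref{theorem:Region} and is classical \cite{el2011network}. Given this equivalence, the theorem follows at once.

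\emph{Converse.} From any sequence of codes achieving $\Rm$: since the weakest intended user $k = \max\Jc$ must decode $M_\Jc$, user $k$ in particular decodes the whole bundle $\tilde M_k$, so the same codes constitute a valid code for the $K$-private-message degraded Gaussian BC (user $k$ wanting $\tilde M_k$) at rates $\tilde R_k$. The standard entropy-power-inequality (Bergmans) converse for the $K$-user degraded Gaussian BC then yields $\alpha_1,\dots,\alpha_K \ge 0$ with $\sum_k \alpha_k \le 1$ and $\tilde R_k \le \log\frac{1+h_k\sum_{j\le k}\alpha_j P}{1+h_k\sum_{j<k}\alpha_j P}$ for all $k$, which is precisely the claimed outer bound.

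\emph{Achievability.} Fix $\alpha \ge 0$ with $\sum_k \alpha_k \le 1$ and any $\Rm$ whose level-rates satisfy $\tilde R_k \le \log\frac{1+h_k\sum_{j\le k}\alpha_j P}{1+h_k\sum_{j<k}\alpha_j P}$; since the region is a box in the $\tilde R$-coordinates, the bits of $\tilde M_k$ may be split arbitrarily among its constituents $\{M_\Jc : \max\Jc = k\}$. Transmit $x = \sum_{k=1}^K \sqrt{\alpha_k P}\,\tilde x_k$ with independent Gaussian codebooks, and let every user decode by successive cancellation starting from the highest-power layer $K$ downward. The key observation is that, after a user with gain $h_j$ has stripped off layers $K,\dots,\ell+1$, its residual rate for layer $\ell$ is $\log\bigl(1 + \tfrac{h_j\alpha_\ell P}{1 + h_j\sum_{i<\ell}\alpha_i P}\bigr)$, and since $h \mapsto \tfrac{h\,a}{1+h\,b}$ is nondecreasing in $h$ for $a,b\ge 0$ while $h_j \ge h_\ell$ whenever $j \le \ell$, this residual rate is at least $\log\frac{1+h_\ell\sum_{i\le\ell}\alpha_i P}{1+h_\ell\sum_{i<\ell}\alpha_i P} \ge \tilde R_\ell$. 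Hence every user $j$ can successively decode all layers $K, K-1,\dots,j$, thereby recovering $\tilde M_k$ for every $k \ge j$, and in particular every message $M_\Jc$ with $j \in \Jc$ (as then $\max\Jc \ge j$). A routine joint-typicality/AEP argument drives the error probability to zero.

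The delicate steps are (i) checking that a \emph{single} decoding order works for all users simultaneously — this is exactly where the monotonicity in the channel gain and the ordering $h_1 \ge \cdots \ge h_K$ enter — and (ii) the converse, where I would simply invoke the standard EPI/Bergmans argument for the $K$-user degraded Gaussian BC rather than reprove it. The super-message bundling is what makes both directions routine: the converse uses only the decoding constraint at the weakest intended user of each message, while achievability meets that same outer bound while honoring the decoding constraints at \emph{all} intended users, so the two match and $\Gamma(\hv)$ is exactly the stated region.
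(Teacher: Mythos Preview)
Your proposal is correct and follows essentially the same approach as the paper: both bundle the $2^K-1$ messages into $K$ super-messages $\tilde M_k=\{M_\Jc:\max\Jc=k\}$ and reduce to the classical $K$-private-message degraded Gaussian BC, using superposition with successive cancellation for achievability. The only difference is cosmetic: the paper rederives the Bergmans/EPI converse explicitly for $K=3$ on the bundled messages, whereas you invoke the known private-message capacity region directly.
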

\begin{proof}
 \ifarxiv
	See Appendix \ref{appendix:theorem1}.
\else
	See \cite{techreport}.  
\fi
\end{proof}

The achievability builds on superposition coding at the transmitter and successive interference cancellation at receivers. For $K=3$, the transmit signal is simply given by 
\[x = x_1 + x_2 + x_3 + x_{12} + x_{23}+ x_{13}+ x_{123},
\]
where $\{x_{\Jc}\}$ are mutually independent and $x_{\Jc}\sim\Nc_{\CC}(0, \alpha_{\Jc} P)$ denotes the signal corresponding to the message $M_{\Jc}$ intended to the subset $\Jc\subseteq \{1,2,3\}$.   
User 3 (the weakest user) decodes $\tilde{M}_3 =\{M_{3}, M_{13}, M_{23}, M_{123}\}$ by treating all the other messages as noise. User 2 decodes first the messages $\tilde{M}_3$ and then jointly decodes $\tilde{M}_2 =\{M_2, M_{12}\}$. Finally, user 1 (the strongest user) decodes successively $\tilde{M}_3, \tilde{M}_2$ then finally $M_1$. 

\subsection{Weighted sum rate maximization}
In order to characterize the boundary of the capacity region $\Gamma(\hv)$, we consider the weighted sum rate maximization given as
\begin{align}\label{eq:generalWSR}
\max_{\rv \in \Gamma(\hv)}\sum_{\Jc: \Jc \subseteq [K]} \theta_{\Jc} r_{\Jc}.
\end{align}
By exploiting a simple property of the capacity region, the problem at hand can be cast into a simpler problem as summarized below. 
\begin{theorem}\label{theorem:WSR}
The weighted sum rate maximization with $2^K-1$ variables in \eqref{eq:generalWSR} reduces to a simpler problem with $K$ variables, given by 
\begin{align*}
f(\alphav) = \sum_{k=1}^K \phi_k \log\frac{1+h_k \sum_{j=1}^{k} \alpha_j P}{1+ h_k\sum_{j=1}^{k-1} \alpha_j P}, 
\end{align*}
where $\phi_k$ denotes the largest weight for user $k$ 
\[
\phi_k\eqdef \max_{\Kc: k\in \Kc \subseteq [k]}\theta_{\Kc}.
\]
\end{theorem}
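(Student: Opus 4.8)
The plan is to show that the general weighted sum rate maximization in \eqref{eq:generalWSR} over the $(2^K-1)$-dimensional capacity region $\Gamma(\hv)$ is attained at a vertex/face determined entirely by the $K$ private-rate-type constraints of Theorem~\ref{theorem:Region}, so that the objective collapses to the $K$-variable function $f(\alphav)$. The starting observation is that both the objective $\sum_{\Jc}\theta_{\Jc} r_{\Jc}$ and the constraint set $\Gamma(\hv)$ couple the variables $\{r_{\Jc}\}$ only through the $K$ partial sums $S_k \defeq \sum_{\Kc:\, k\in\Kc\subseteq[k]} r_{\Kc}$ appearing on the left-hand side of \eqref{eq:9} (with $S_1 = R_1$). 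Indeed, each $r_{\Jc}$ with $\max \Jc = k$ contributes to exactly one such sum, namely $S_k$, and to no other. Hence, for any fixed vector $(S_1,\dots,S_K)$, the inner problem $\max \sum_{\Jc} \theta_{\Jc} r_{\Jc}$ subject to $r_{\Jc}\ge 0$ and $\sum_{\Jc:\max\Jc=k} r_{\Jc} = S_k$ decouples across $k$, and by linearity its optimal value is $\sum_{k=1}^K \big(\max_{\Kc:\, k\in\Kc\subseteq[k]} \theta_{\Kc}\big) S_k = \sum_{k=1}^K \phi_k S_k$: to maximize a weighted sum of nonnegative reals with fixed total, put all the mass on the coordinate with the largest coefficient.

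The second step is to characterize the set of achievable partial-sum vectors $(S_1,\dots,S_K)$. From Theorem~\ref{theorem:Region}, these are exactly the vectors satisfying $S_1 \le \log(1+h_1\alpha_1 P)$ and $S_k \le \log\frac{1+h_k\sum_{j=1}^k \alpha_j P}{1+h_k\sum_{j=1}^{k-1}\alpha_j P}$ for $k=2,\dots,K$, for some $\{\alpha_k\}\ge 0$ with $\sum_k \alpha_k \le 1$; here one must check that every such $(S_1,\dots,S_K)$ is realized by some genuine rate-tuple $\{r_{\Jc}\}\in\Gamma(\hv)$, which is immediate since one may simply set $r_{\Jc}=0$ for all $\Jc$ except one representative per $k$ carrying the value $S_k$ (the sets $\{\Jc:\max\Jc=k\}$ are nonempty and disjoint, covering all of $2^{[K]}\setminus\{\emptyset\}$). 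Combining the two steps, \eqref{eq:generalWSR} equals $\max \sum_{k=1}^K \phi_k S_k$ over this region, and since the coefficients $\phi_k$ are nonnegative, the optimum is achieved with each $S_k$ at its upper bound $\log\frac{1+h_k\sum_{j=1}^k\alpha_j P}{1+h_k\sum_{j=1}^{k-1}\alpha_j P}$, giving precisely $f(\alphav)$ maximized over $\{\alpha_k\ge0,\ \sum_k\alpha_k\le1\}$.

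I would present this as: (i) reduce the inner linear program for fixed partial sums to the formula $\sum_k \phi_k S_k$; (ii) invoke Theorem~\ref{theorem:Region} to describe the feasible partial sums and note monotonicity of the objective in each $S_k$ lets us saturate the constraints; (iii) conclude. The only mildly delicate point — and the part I would write most carefully — is the equivalence between optimizing over the full $(2^K-1)$-dimensional region and optimizing over the $K$-dimensional image under the map $\rv \mapsto (S_1,\dots,S_K)$: one direction needs that every feasible $\rv$ has its image feasible (clear from the constraints), and the other needs a feasible pre-image for every feasible partial-sum vector (clear from the disjoint-cover structure of $\{\Jc:\max\Jc=k\}$ and nonnegativity). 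Everything else is bookkeeping on indices and the trivial observation that $\max_i \sum a_i c_i$ over the simplex $\sum a_i = \text{const}$, $a_i\ge0$, equals $(\max_i c_i)\cdot\text{const}$.
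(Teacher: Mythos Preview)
Your proposal is correct and follows essentially the same approach as the paper: both arguments observe that the messages $\{M_{\Jc}:\max\Jc=k\}$ share a single sum-rate constraint (your $S_k$, the paper's ``hyperplane with vertices $C_k\ev_i$''), so the linear objective is maximized by concentrating all of $S_k$ on the message with the largest weight $\phi_k$, after which only the $K$ power-allocation variables remain. Your write-up is somewhat more explicit than the paper's about the two-way correspondence between the $(2^K-1)$-dimensional region and its $K$-dimensional image under $\rv\mapsto(S_1,\dots,S_K)$, but the underlying idea is identical.
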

\begin{proof}
The proof builds on the simple structure of the capacity region. We first remark that for a given power allocation of other users, user $k$ sees $2^{k-1}$ messages $\{M_{\Jc}\}$ for $k\in \Jc \subseteq [k]$ with the equal channel gain. For a given power allocation $\alpha^{k}$, the capacity region of these messages is a simple hyperplane 
characterized by $2^{k-1}$ vertices $C_k \ev_i$ for $i=1, \dots, 2^{k-1}$, where $C_k$ is the sum rate of user $k$ in the RHS of \eqref{eq:9} and $\ev_i$ is a vector with one for the $i$-th entry and zero for the others. Therefore, the weighted sum rate is maximized for user $k$ by selecting the vertex corresponding to the largest weight, denoted by $\phi$. This holds for any $k$. 
\end{proof}

We provide an efficient algorithm to solve this power allocation problem as a special case of the parallel Gaussian broadcast channel studied in \cite[Theorem 3.2]{david1999optimal}. Following \cite{david1999optimal}, we define the rate utility function for user $k$ given by 
\begin{align*}
u_k(z)= \frac{\phi_k}{1/h_k+z}-\lambda,
\end{align*}
where $\lambda$ is a Lagrange multiplier. The optimal solution corresponds to selecting 
the user with the maximum rate utility at each $z$ and the resulting power allocation for user $k$ is given as 
\begin{align}\label{eq:optimalalpha}
 \alpha^*_k = \left\{ z:   [\max_j u_j(z) ]_+ = u_k(z)  \right \}/P,
\end{align}
with $\lambda$ satisfying $P=\left [ \max_k \frac{\phi_k}{\lambda} -\frac{1}{h_k} \right]_+$.

\subsection{Application example}

In this subsection, we consider the long-term average sum content delivery maximization as one of the applications of the weighted sum rate maximization solved previously. 
By treating a codeword intended to a subset $\Kc$ of users as a message intended to the same subset, i.e. $M_{\Kc}= V_{\Kc}$ in \eqref{eq:V} and assuming that these codewords for different subsets are all independent, the sum content delivery rate achieved by superposition coding can be written as the weighted sum rate: 
\begin{align*} 
\sum_{\Kc: \Kc \subseteq [K]} \theta_{\Kc} R_{\Kc} \quad \text{with} \quad \theta_{\Kc} = \frac{ |\Kc| }{T(m,|\Kc|)},
\end{align*}
where $R_{\Kc}$ denotes the rate of message $M_{\Kc}$ satisfying the constraints in Theorem \ref{theorem:Region}. By noting that the weights depend only on the cardinality of $\Kc$ and that the function $k/T(m,k)$ is increasing in $k$, we have the following properties i) $\theta_{\Kc} = \theta_{\Kc'}, \;\; \forall \Kc, \Kc'  \text{~such that~} |\Kc|=| \Kc'|$, ii) $\theta_{\Kc} < \theta_{\Jc} , \;\; \forall \Kc \subset \Jc$. 

These properties readily imply that the effective weight of user $k$, denoted by $\phi_k$, is given by 
\begin{align*}
\phi_k = \max_{\Jc: k\in \Jc \subseteq [k]} \theta_{\Jc} =\frac{k }{T(m,k)}.
\end{align*}
Following Theorem \ref{theorem:WSR}, the resulting sum delivery rate of superposition coding for a given channel state such that $h_1\geq \dots \geq h_K$ is given by
\begin{align*}
\Rsp(\hv) = \sum_{k=1}^K \frac{k}{T(m, k)} \log\left(1+\frac{h_{k} \alpha^*_k P}{1+h_{k} \sum_{j=1}^{k-1} \alpha^*_j P }\right),
\end{align*}
where $\{\alpha^*_j\}$ is the optimal power allocation in \eqref{eq:optimalalpha}. The long-term average sum delivery rate is given by 
\begin{align*}
\Rsp = \EE_{\hv} [ \Rsp(\hv) ].
\end{align*}
\section{Performance Analysis}\label{sec:perfom_analysis}
In this section, we analyze the long-term average sum delivery rate of the proposed scheme in two cases of interest: a) a user selection scheme that selects the best subset of users as a function of the channel state and the weights, b) naive coded caching (or baseline scheme) that applies coded caching to $K$ users as described in Section \ref{section:model}. By restricting ourselves to the symmetric fading case ($\gamma_k=1, \forall k$), we consider
two regimes of interest, i.e. large $K$ and high SNR. 

\subsection{Baseline scheme: \textit{naive} coded caching}
In this scheme, the server serves all $K$ users with the multicast rate limited by the worst user as in \eqref{eq:multicast_rate}. We define the exponential integral function $E_1(x)=\int_{1}^{+\infty} {e^{-xt}  \over t} dt$. The performance of this scheme is summarized below. 
\begin{proposition}\label{BaselineScheme}
	(i) $\Rbl (K,P) = \phi_K e^{{K \over P}} E_1\left({K \over P}\right)$.
	
	(ii) For all $P$: $\Rbl(K,P) \sim {P m \over  1-m}$ when $K \to \infty$.
	
	(iii) For all $K$: $\Rbl(K,P) \sim \phi_K \log(P)$ when $P \to \infty$.
\end{proposition}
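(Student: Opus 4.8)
The plan is to start from the definition $\Rbl(K,P) = \frac{K}{T(m,K)}\,\EE[R_{\rm mc}(\hv)]$ in~\eqref{eq:baseline_rate} and recall that, in the symmetric case $\gamma_k=1$, the worst fading gain $\min_{j\in[K]} h_j$ is exponentially distributed with mean $1/K$ (minimum of $K$ i.i.d.\ unit-mean exponentials). Hence $\EE[R_{\rm mc}(\hv)] = \EE[\log(1+P\,H)]$ with $H\sim\mathrm{Exp}(K)$. For part (i), I would compute $\EE[\log(1+PH)] = \int_0^\infty K e^{-Kx}\log(1+Px)\,dx$ by integrating by parts (differentiate $\log(1+Px)$, integrate $Ke^{-Kx}$), which yields $\int_0^\infty \frac{P e^{-Kx}}{1+Px}\,dx$; the substitution $u = K(x + 1/P)$ turns this into $e^{K/P}\int_{K/P}^\infty \frac{e^{-u}}{u}\,du = e^{K/P}E_1(K/P)$. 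Since $\phi_K = K/T(m,K)$, this gives $\Rbl(K,P) = \phi_K e^{K/P}E_1(K/P)$.

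For part (ii), fix $P$ and let $K\to\infty$, so the argument $x\defeq K/P\to\infty$. I would use the standard asymptotic $e^{x}E_1(x) \sim 1/x$ as $x\to\infty$ (which follows from $E_1(x) = e^{-x}\big(\frac1x - \frac1{x^2} + O(x^{-3})\big)$, itself obtained by repeated integration by parts). Thus $e^{K/P}E_1(K/P) \sim P/K$, so $\Rbl(K,P) \sim \phi_K \cdot \frac{P}{K} = \frac{P}{T(m,K)}$. It then remains to observe that $T(m,K)\to \frac{1-m}{m}$ as $K\to\infty$ in both the centralized case ($(1-m)\frac{1}{1/K+m}\to\frac{1-m}{m}$) and the decentralized case ($(1-m)\frac{1-(1-m)^K}{m}\to\frac{1-m}{m}$ since $(1-m)^K\to 0$ for $m\in(0,1]$), giving $\Rbl(K,P)\sim \frac{Pm}{1-m}$.

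For part (iii), fix $K$ and let $P\to\infty$, so $x = K/P\to 0^+$. Here I would use $e^{x}E_1(x)\sim -\log x$ as $x\to 0^+$, which follows from the series expansion $E_1(x) = -\gamma - \log x + O(x)$ together with $e^x\to 1$; hence $e^{K/P}E_1(K/P)\sim -\log(K/P) = \log(P/K)\sim \log P$. Multiplying by $\phi_K$ gives $\Rbl(K,P)\sim \phi_K\log P$.

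The only mildly delicate points are the two asymptotic estimates of $e^x E_1(x)$ at $0$ and at $\infty$, which are classical; everything else is the integration-by-parts evaluation of the integral and the elementary limit of $T(m,K)$. I do not expect a genuine obstacle here—the main care needed is simply to confirm the exponential-minimum fact and to handle the $m=1$ boundary in part (ii) correctly (where $T(1,K)=0$ makes the statement vacuous, so one implicitly takes $m<1$).
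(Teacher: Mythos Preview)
Your proposal is correct and follows essentially the same approach as the paper: identify $\min_j h_j\sim\mathrm{Exp}(K)$, reduce $\EE[\log(1+P h_{\min})]$ to $e^{K/P}E_1(K/P)$, and then invoke the standard small- and large-argument asymptotics of $E_1$ together with $T(m,K)\to(1-m)/m$ (equivalently $\phi_K\sim Km/(1-m)$). The only cosmetic difference is that for part~(ii) the paper approximates the integral $\int_0^\infty e^{-x}\log(1+\tfrac{P}{K}x)\,dx\sim P/K$ directly rather than citing the expansion $e^xE_1(x)\sim 1/x$, but these are the same fact.
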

\ifarxiv
\begin{proof}
See Appendix \ref{appendix:prop1}.
\end{proof}
\else
{\bf Sketch of proof: } The content delivery rate is:
\begin{align*}
\Rbl (K,P)= \phi_K \EE\left[ \log\left(1+ Ph_{\rm min} \right)\right],
\end{align*}
where $h_{\rm min}\defeq\min_{k=1, \dots, K} h_k$. Since $(h_k)_{k=1,...,K}$ are i.i.d. with distribution Exp($1$), $h_{\rm min}$ has distribution Exp($K$), so:
\begin{align*}
 	\EE\left[ \log\left(1+ Ph_{\rm min} \right)\right] &= \int_{0}^{+\infty} e^{-x} \log\left(1 + {P \over K}x\right) dx \\
 	&= e^{{K \over P}} E_1\left({K \over P}\right),
\end{align*}
which yields (i). In the regime $K\rightarrow \infty$, we have $\phi_K \sim \frac{Km}{1-m}$ as well as $\log(1+Px/K) \sim Px/K$, yielding (ii).  In the regime of high SNR $P\rightarrow \infty$, we use $E_1(x)\sim \log(1/x)$ for $x\rightarrow 0$, yielding (iii).  
\ep
\fi

\subsection{User selection scheme: opportunistic scheduling}

Albeit suboptimal, we consider a simple time-sharing strategy, which allocates a fraction of time $\eta_{\Kc}$ to the subset of users $\Kc$, with $\sum_{\Kc\subseteq [K]} \eta_{\Kc}=1$. 
The corresponding weighted sum rate maximization is given by 
\begin{align*}
\max_{\etav: \sum_{\Kc} \eta_{\Kc} =1} \sum_{\Kc\subseteq [K]} \theta_{\Kc} \eta_{\Kc} \log(1+P\min_{k\in\Kc}h_k). 
\end{align*}
 Let $\piv = \{\pi_1,\dots,\pi_K\}$ denote the permutation such that $h_{\pi_1} \ge \dots \ge h_{\pi_K}$.
Because of the capacity region structure, the problem at hand can be simplified into: 
\begin{align*}
\max_{\etav} \sum_{k=1}^K \phi_{k} \eta_{k} \log(1+h_{\pi_k}P). 
\end{align*}
The optimal solution is readily given by
\begin{align*}
\eta_k= \begin{cases}
1, & \text{if $k=  \arg\max_j \phi_{j}\log(1+h_{\pi_j}P)$},\\
0, & \text{otherwise}.
\end{cases}
\end{align*}
This means that we transmit to only one set of users maximizing the instantaneous weighted rate with full power. By transmitting opportunistically to the group of users with the highest sum content delivery rate at each channel realization, the long-term average sum content delivery rate is given by 
\begin{align*}
\Rsc(K,P) &= \EE\left[  \max_k \phi_k \log(1+h_{\pi_k} P) \right].
\end{align*}
We characterize $\Rsc(K, P)$ in two regimes of interest. 
\begin{proposition}\label{SelectionScheme}
(i) For all $P$: \newline $\Rsc(K,P) \sim {Km \over 1-m} e^{(\frac{1}{P}-{1 \over W(P)})} W(P)$ when $K \to \infty$, where $W(x)$ is the Lambert function i.e.  $W(x)e^{W(x)}=x$.

(ii) For all $K$: $\Rsc(K,P) \sim \phi_K \log(P)$ when $P \to \infty$.	
\end{proposition}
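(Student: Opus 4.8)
The plan is to work from the formula derived just above the statement, $\Rsc(K,P)=\EE\big[\max_{k\in[K]}\phi_k\log(1+h_{\pi_k}P)\big]$, where $h_{\pi_1}\ge\cdots\ge h_{\pi_K}$ are the descending order statistics of $K$ i.i.d.\ $\mathrm{Exp}(1)$ variables and $\phi_k=k/T(m,k)$. Two elementary facts will carry everything: (a) $T(m,k)$ is increasing in $k$ with $\lim_k T(m,k)=\frac{1-m}{m}$ and $\inf_{k\ge1}T(m,k)=:c_0>0$, hence $\frac{km}{1-m}\le\phi_k\le k/c_0$; and (b) the Rényi representation $h_{\pi_k}=\sum_{j=1}^{K-k+1}Z_j/(K-j+1)$ with $Z_j$ i.i.d.\ $\mathrm{Exp}(1)$, which gives $\EE[h_{\pi_k}]=\sum_{i=k}^{K}1/i$, $\mathrm{Var}(h_{\pi_k})=\sum_{i=k}^{K}1/i^2$, and the uniform bound $h_{\pi_k}\le\frac1k\sum_{j=1}^{K}Z_j=:S_K/k$. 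Part (ii) then follows quickly: since $\phi_k$ is nondecreasing and $h_{\pi_1}\ge h_{\pi_k}\ge h_{\pi_K}$, one has $\phi_K\log(1+h_{\pi_K}P)\le\max_k\phi_k\log(1+h_{\pi_k}P)\le\phi_K\log(1+h_{\pi_1}P)$. Taking expectations, the left side equals $\phi_K e^{K/P}E_1(K/P)$ exactly as in the proof of Proposition~\ref{BaselineScheme} (because $h_{\pi_K}=h_{\min}\sim\mathrm{Exp}(K)$), which is $\sim\phi_K\log P$ as $P\to\infty$ via $E_1(x)\sim\log(1/x)$; the right side is $\le\phi_K\log\big(1+P\sum_{i=1}^{K}1/i\big)$ by Jensen's inequality (concavity of $\log$), also $\sim\phi_K\log P$ for fixed $K$. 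The squeeze gives (ii).

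For part (i) I would first establish that $\frac1K\max_k\phi_k\log(1+h_{\pi_k}P)$ converges in probability to $\frac{m}{1-m}g^\star$, where $g(x)\defeq x\log(1+P\log(1/x))$ on $(0,1]$ and $g^\star\defeq\sup_{x\in(0,1]}g(x)$. Lower bound: fix $x\in(0,1]$ and take $k=\lceil xK\rceil$; then $\frac1K\phi_k\to\frac{mx}{1-m}$ by (a) and $h_{\pi_k}\to\log(1/x)$ in $L^2$ (mean $\to\log(1/x)$, variance $O(1/(xK))$ by (b)), so $\frac1K\phi_k\log(1+h_{\pi_k}P)\to\frac{m}{1-m}g(x)$ in probability; picking $x$ with $g(x)>g^\star-\varepsilon$ gives the bound. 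Upper bound: split the index range at $k=\delta K$. On $k\ge\delta K$, the convergence $h_{\pi_{\lceil xK\rceil}}\to\log(1/x)$ is uniform in probability on $x\in[\delta,1]$ (monotonicity of $k\mapsto h_{\pi_k}$ plus a finite-grid argument) and $\frac1K\phi_{\lceil xK\rceil}\to\frac{mx}{1-m}$ uniformly, so by continuity of $g$ on $[\delta,1]$, $\max_{k\ge\delta K}\frac1K\phi_k\log(1+h_{\pi_k}P)\to\frac{m}{1-m}\max_{x\in[\delta,1]}g(x)$; on $k<\delta K$, (a) and $h_{\pi_k}\le S_K/k$ give $\frac1K\phi_k\log(1+h_{\pi_k}P)\le\frac1{c_0}\,x\log\!\big(1+P(S_K/K)/x\big)$, and since $x\log(1+Pc/x)\to0$ as $x\to0$ while $S_K/K\to1$, this block can be made arbitrarily small with high probability by taking $\delta$ small. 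Letting $\delta\downarrow0$ and using $\max_{x\in[\delta,1]}g(x)\uparrow g^\star$ completes the claim. To pass from convergence in probability to convergence of expectations, note that $h_{\pi_k}\le S_K/k$, $\frac1K\phi_k\le x/c_0$ and $\log(1+y)\le y$ yield the deterministic envelope $\frac1K\max_k\phi_k\log(1+h_{\pi_k}P)\le\frac{P}{c_0}\cdot\frac{S_K}{K}$, and $\{S_K/K\}_K$ is bounded in $L^2$, hence uniformly integrable; therefore $\Rsc(K,P)\sim\frac{Km}{1-m}\,g^\star$.

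It remains to evaluate $g^\star=\max_{x\in(0,1]}x\log(1+P\log(1/x))$. With the substitution $u=1+P\log(1/x)\ge1$, the stationarity condition $g'(x)=0$ becomes $\log u=P/u$, i.e.\ $u\log u=P$, whose unique solution is $u^\star=e^{W(P)}=P/W(P)$; since $W(P)\le P$ we have $\log(1/x^\star)=(u^\star-1)/P\ge0$, so $x^\star=e^{(1-u^\star)/P}=e^{1/P-1/W(P)}\in(0,1]$ is admissible and is the unique interior maximiser, with value $g^\star=x^\star\log u^\star=W(P)\,e^{1/P-1/W(P)}$. Substituting into $\Rsc(K,P)\sim\frac{Km}{1-m}g^\star$ gives exactly (i).

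The main obstacle is part (i): making the passage from the random maximum over the index $k$ to the deterministic maximum over $x\in(0,1]$ rigorous and \emph{uniform} in $k$ — in particular handling the ranges where $k/K$ is near $0$ (where the order statistics are poorly concentrated, e.g.\ Gumbel-type fluctuations of $h_{\pi_1}$) and near $1$ — and securing the uniform integrability that upgrades the in-probability convergence to convergence of $\Rsc(K,P)/K$.
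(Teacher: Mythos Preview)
Your proof is correct, and for part~(ii) your sandwich $\phi_K\log(1+h_{\pi_K}P)\le\max_k\phi_k\log(1+h_{\pi_k}P)\le\phi_K\log(1+h_{\pi_1}P)$ is in fact cleaner than the paper's dominated-convergence argument. For part~(i), however, you take a genuinely different route. The paper parameterises by a \emph{threshold} $z$ rather than an index fraction: it rewrites the random maximum as $\max_{z\ge0}\phi_{U(z)}\log(1+Pz)$ with $U(z)=\sum_k\indic\{h_k\ge z\}$, which immediately gives the lower bound by swapping $\max$ and $\EE$ (no order-statistics concentration needed), and then invokes Glivenko--Cantelli $\sup_z|U(z)/K-e^{-z}|\to0$ a.s.\ plus reverse Fatou for the upper bound. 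Your approach instead fixes the fraction $x=k/K$, uses the R\'enyi representation to concentrate $h_{\pi_{\lceil xK\rceil}}\to\log(1/x)$, handles uniformity via monotonicity and a finite grid, and controls the small-$k$ tail with the deterministic bound $h_{\pi_k}\le S_K/k$; your uniform-integrability envelope $\frac{P}{c_0}S_K/K$ is also more explicit than the paper's domination. The two pictures are dual (the paper's $g(z)=\frac{m}{1-m}e^{-z}\log(1+Pz)$ is your $\frac{m}{1-m}g(x)$ under $x=e^{-z}$), but the paper's threshold identity buys a one-line lower bound, while your order-statistics route is more elementary in that it avoids Glivenko--Cantelli and makes the passage to expectations fully explicit.
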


\ifarxiv
\begin{proof}
See Appendix \ref{appendix:prop2}.
\end{proof}
\else
{\bf Sketch of proof:} We first define two functions for $z\ge 0$
\[
U(z) \defeq \sum_{k=1}^K \indic\{ h_k \ge z \}, \;\;\; g(z) \defeq {m \over 1-m} e^{-z} \log(1 + P z). 
\]
The first proof element is the identity:
$$
 \max_k \phi_k \log(1+h_{\pi_k} P) = \max_{z \ge 0} \phi_{U(z)}  \log(1+ z P).
$$
Taking expectations we get the lower bound
$$
\Rsc(K,P) \ge \max_{z \ge 0} \EE( \phi_{U(z)}  \log(1+ z P)) = K \max_{z \ge 0} g(z),
$$ 
where $g$ is maximized at $z^* = {1 \over W(P)} - {1 \over P}$. 
A slightly more involved argument yields also the upper bound $$\limsup_{K\rightarrow\infty}{ \Rsc(K,P) \over K} \leq  \max_z g(z).$$ The basis of the argument is the Glivenko-Cantelli theorem i.e.  $\sup_{z \ge 0} | U(z)/K - e^{-z} | \to 0$ almost surely (a.s.). 



Consider (ii). We have for $P \to \infty$:
	$$
		{\max_k \phi_k \log(1+h_{\pi_k} P) \over \log(P)} \xrightarrow{a.s.} \phi_K.
	$$	
	Furthermore, $\sup_{P \ge 0} {\Rsc(K,P) \over \log(P)} < \infty$, so by Lebesgue's theorem ${\Rsc(K,P) \over \log(P)} \to \phi_K$. \ep	
\fi

\begin{figure*}
  \begin{minipage}{\columnwidth}
\vspace{-8pt}
\begin{center}
\includegraphics[width=0.8\textwidth,clip=]{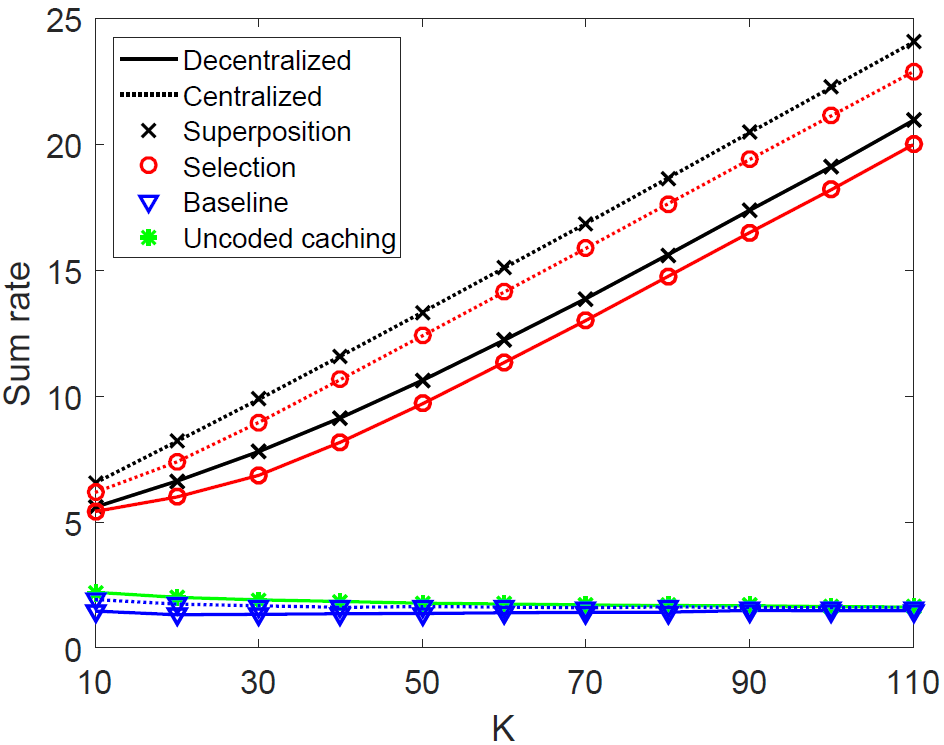}
\vspace{-5pt}
\caption{Sum rate vs $K$ for $P=10$(dB).}
\label{fig:users}
\end{center}
\vspace{-10pt}
\end{minipage}
  \begin{minipage}{\columnwidth}
  \vspace{-8pt}
\begin{center}
\includegraphics[width=0.8\textwidth,clip=]{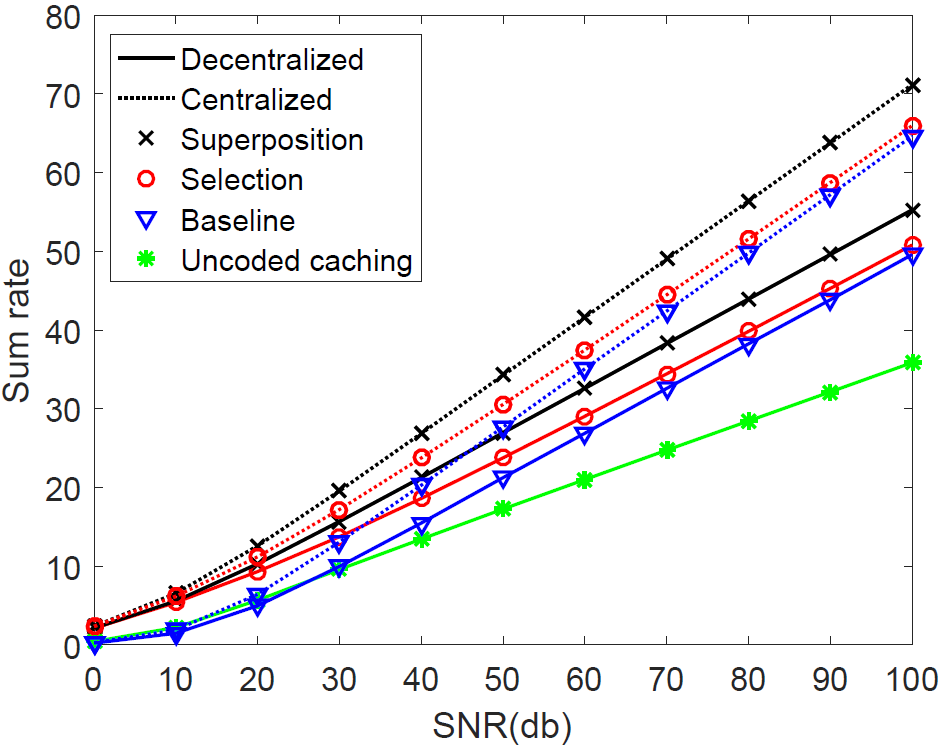}
\vspace{-5pt}
\caption{Sum rate vs SNR for $K=10$.}
\label{fig:SNR}
\end{center}
\vspace{-10pt}
\end{minipage}
\end{figure*} 

\subsection{Interpretation of the results}

From propositions \ref{BaselineScheme} and \ref{SelectionScheme}, the following remarks are in order: 1) in the large $K$ regime, the long-term sum delivery rate of the selection scheme grows linearly for any finite SNR. This is in a sharp contrast with the baseline scheme, whose sum delivery rate converges to a constant; 2) in the high SNR regime, both schemes yield the same performance, i.e. $\frac{K}{T(m, K)}\log P$, for any finite $K$ because the sum delivery rate is no longer sensitive to the randomness of channels and is maximized solely by exploiting the global caching gain; 3) It is worth noticing that the performance of selection scheme can be achieved without instantaneous channel knowledge. Namely, each user can measure its SNR and send a one-bit feedback indicating whether it is above or below the threshold value given by $ Pz^*=\frac{P}{W(P)}-1$.

\section{Numerical Examples}\label{sec:numericalEx}
In this section, we compare our proposed superposition scheme, its two special cases (baseline and selection), as well as uncoded caching. 
Uncoded caching refers to the case where the server sends the remaining $(1-m)F$ bits of the requested file at rate $\log(1+Ph_k)$ for each user $k$. Thus, the corresponding long-term average sum delivery rate is given by $$\EE\left[K \left(\sum_{k=1}^{K}\frac{1-m}{\log(1+Ph_k)}\right)^{-1} \right].$$ We consider a database of size $N=10^4$, normalized memory size of $m=10^{-1}$. 
In Fig. \ref{fig:users}, we plot the long-term sum content delivery rate as a function of the number of users at $P=10$ dB for both centralized (dashed line) and decentralized (solid line) placement strategies. We observe that both the superposition schemes and the selection scheme offer a linear increase, whereas the performance of baseline and uncoded schemes is bounded. This behavior agrees with the analysis of the previous section and implies that the performance of coded caching at low to moderate SNR is limited by the vanishing multicast rate. Furthermore, the selection scheme offers performance almost as good as the superposition scheme, despite its reduced complexity.


In Fig. \ref{fig:SNR}, the long-term average sum content delivery rate is plotted as a function of SNR for different schemes. 
  We observe that the performance of selection, baseline scheme becomes identical for large SNR, which confirms our analysis. In addition, the sum content delivery rate increases as SNR with a pre-log of $\phi_K$, which in turn depends on the placement strategy \eqref{eq:defT}. By comparing uncoded caching and the baseline scheme, we observe that after a certain SNR threshold, the baseline scheme performs better than uncoded caching scheme.
  

\section{Conclusion} 
We have studied content delivery using coded caching over fading broadcast channels. Contrary to the baseline scheme applying coded caching to $K$ users irrespectively of channel state information, we proposed opportunistic delivery schemes that achieve a linear increase of the sum content delivery rate by a careful selection of the user subset as a function of both channel state information and priorities. In order to reduce the amount and accuracy of feedback, we proposed a simple threshold-based feedback scheme yielding the same scalable solution while requiring only one bit per user. In future work, we plan on providing a detailed analysis of the performance of the more general superposition scheme proposed here.

\ifarxiv
\section{Appendix}
\subsection{Proof of Theorem \ref{theorem:Region}} \label{appendix:theorem1}
We provide the proof for $K=3$ and the general case $K>3$ follows readily. Here $\log$ denotes the binary logarithm,  $h(.)$ denotes the differential entropy, and $H(.)$ denotes Shannon entropy.  

\underline{Converse}
Notice that the channel output of user $k$ in \eqref{eq:FadingBC} for $n$ channel use can be equivalently written as 
\begin{align}
 y_{k,i}=x_{i}+\tilde{w}_{k,i}, ~~~~i=1,\dots,n
\end{align}
where $\tilde{w}_{k,i}=\frac{w_k[i]}{\sqrt{h_k}}\sim\Nc_{\CC}(0, N_k)$  for $N_k=\frac{1}{h_k}$. Since $N_1\leq N_2\leq N_3$, we set $\tilde{M}_k=\cup_{k\in\Jc\subseteq[k]} M_{\Jc}$ the message set that must be decoded by user $k$ at sum rate $\tilde{R}_k=\cup_{k\in\Jc\subseteq[k]} R_{\Jc}$. More explicitly, we have $\tilde{M}_1=\{M_1\}$, $\tilde{M}_2=\{M_2,M_{12}\}$, $\tilde{M}_3=\{M_3,M_{13},M_{23},M_{123}\}$.
By Fano's inequality, we have 
\begin{align}\label{fano}
\begin{cases}
nH(\tilde{M}_1)&\leq I(\tilde{M}_1;Y_1\cond \tilde{M}_2,\tilde{M}_3)\\
nH(\tilde{M}_2)&\leq I(\tilde{M}_2;Y_2\cond \tilde{M}_3)\\
nH(\tilde{M}_3)&\leq I(\tilde{M}_3;Y_3).
\end{cases}
\end{align}
Consider first user $3$.
\begin{align}\label{eq171}
 I(\tilde{M}_3;Y_3)=h(Y_3)-h(Y_3\cond \tilde{M}_3).
\end{align}
 Since we have $n\log\left(\pi eN_3\right)=h(Y_3\cond \tilde{M}_3,X) \leq h(Y_3\cond \tilde{M}_3)\leq h(Y_3)\leq n\log\left( \pi e(P+N_3)\right)$, there exists $0\leq \alpha_3\leq 1$ such that 
 \begin{align}\label{eq172}
 h(Y_3\cond \tilde{M}_3)=n\log\left(\pi e((1-\alpha_3)P+N_3)\right).
\end{align}
Using \eqref{eq171} and \eqref{eq172} we obtain
\begin{align}\label{eq173}
\MoveEqLeft{ I(\tilde{M}_3;Y_3)}\nonumber\\
&= h(Y_3)-h(Y_3\cond \tilde{M}_3)\nonumber\\
 &\leq n\log\left(\pi e(P+N_3)\right)-n\log\left(\pi e((1-\alpha_3)P+N_3)\right)\nonumber\\
 &=n\log\left(\frac{N_3+P}{N_3+(1-\alpha_3)P}\right).
\end{align}

 Next consider user $2$.
\begin{align}\label{eq174}
 I(\tilde{M}_2;Y_2\cond \tilde{M}_3)=h(Y_2\cond \tilde{M}_3)-h(Y_2\cond \tilde{M}_2,\tilde{M}_3).
\end{align}
Using the conditional entropy power inequality in \cite{el2011network} , we have
\begin{align}\label{eq175}
h(Y_3\cond \tilde{M}_3)&=h(Y_2+\tilde{W}_3-\tilde{W}_2\cond \tilde{M}_3)\nonumber\\
&\geq n\log(2^{h(Y_2\cond \tilde{M}_3)/n}+2^{h(\tilde{W}_3-\tilde{W}_2\cond \tilde{M}_3)/n})\nonumber\\
&= n\log(2^{h(Y_2\cond \tilde{M}_3)/n}+\pi e(N_3-N_2)).
\end{align}
\eqref{eq172} and \eqref{eq175} imply 
\begin{align}
\MoveEqLeft{n\log\left(\pi e((1-\alpha_3)P+N_3)\right)}\nonumber\\
&\geq n\log(2^{h(Y_2\cond \tilde{M}_3)/n}+\pi e(N_3-N_2))\nonumber
\end{align}
equivalent to
\begin{align}\label{eq176}
 h(Y_2\cond \tilde{M}_3)&\leq n\log(\pi e((1-\alpha_3)P+N_2)).
\end{align}
Since $n\log(\pi eN_2)=h(Y_2\cond \tilde{M}_2,\tilde{M}_3,X)\leq h(Y_2\cond \tilde{M}_2,\tilde{M}_3)\leq h(Y_2\cond \tilde{M}_3)\leq n\log(\pi e((1-\alpha_3)P+N_2))$, there exists $\alpha_2$ such that $0\leq1-\alpha_2-\alpha_3\leq 1-\alpha_3$ and 
\begin{align}\label{eq177}
h(Y_2\cond \tilde{M}_2,\tilde{M}_3)=n\log(\pi e((1-\alpha_2-\alpha_3)P+N_2)).
\end{align}
Using \eqref{eq174}, \eqref{eq176} and \eqref{eq177} it follows
\begin{align}\label{eq178}
 I(\tilde{M}_2;Y_2\cond \tilde{M}_3)&=h(Y_2\cond \tilde{M}_3)-h(Y_2\cond \tilde{M}_2,\tilde{M}_3)\nonumber\\
 &\leq n\log(\pi e((1-\alpha_3)P+N_2))\nonumber\\
 \MoveEqLeft{-n\log(\pi e((1-\alpha_2-\alpha_3)P+N_2))}\nonumber\\
 &=n\log\left(\frac{N_2+(1-\alpha_3)P}{N_2+(1-\alpha_2-\alpha_3)P}\right).
\end{align}

Finally we consider user $1$.
\begin{align}
 \MoveEqLeft{I(\tilde{M}_1;Y_1\cond \tilde{M}_2,\tilde{M}_3)}\nonumber\\
 &=h(Y_1\cond \tilde{M}_2,\tilde{M}_3)-h(Y_1\cond \tilde{M}_1, \tilde{M}_2,\tilde{M}_3)\nonumber\\
 &\leq h(Y_1\cond \tilde{M}_2,\tilde{M}_3)-h(Y_1\cond \tilde{M}_1, \tilde{M}_2,\tilde{M}_3, X)\nonumber\\
 &=h(Y_1\cond \tilde{M}_2,\tilde{M}_3)-h(Y_1\cond  X)\label{eq:markov}\\
 &=h(Y_1\cond \tilde{M}_2,\tilde{M}_3)-n\log\left(\pi eN_1\right)\label{eq179}
\end{align}
where \eqref{eq:markov} holds because $\left( \tilde{M}_1 ,\tilde{M}_2, \tilde{M}_3\right) \rightarrow X\rightarrow Y_1$ is a Markov chain.
Using the conditional entropy power inequality in \cite{el2011network} , we have
\begin{align}\label{eq180}
\MoveEqLeft{h(Y_2\cond \tilde{M}_2, \tilde{M}_3)}\nonumber\\
&=h(Y_1+\tilde{W}_2-\tilde{W}_1\cond \tilde{M}_2,\tilde{M}_3)\nonumber\\
&\geq n\log(2^{h(Y_1\cond \tilde{M}_2,\tilde{M}_3)/n}+2^{h(\tilde{W}_2-\tilde{W}_1\cond \tilde{M}_2,\tilde{M}_3)/n})\nonumber\\
&= n\log(2^{h(Y_1\cond \tilde{M}_2,\tilde{M}_3)/n}+\pi e(N_2-N_1))
\end{align}

\eqref{eq177} and \eqref{eq180} imply 
\begin{align}
\MoveEqLeft{n\log(\pi e((1-\alpha_2-\alpha_3)P+N_2))}\nonumber\\
&\geq n\log(2^{h(Y_1\cond \tilde{M}_2,\tilde{M}_3)/n}+\pi e(N_2-N_1))\nonumber
\end{align}
equivalent to
\begin{align}\label{eq181}
 h(Y_1\cond \tilde{M}_2,\tilde{M}_3)&\leq n\log(\pi e((1-\alpha_2-\alpha_3)P+N_1)).
\end{align}
Let $\alpha_1=1-\alpha_2-\alpha_3$. Combining the last inequality with \eqref{eq179} we obtain 
\begin{align}\label{eq182}
 I(\tilde{M}_1;Y_1\cond \tilde{M}_2,\tilde{M}_3)\leq n\log\left( \frac{N_1+\alpha_1P}{N_1}\right). 
\end{align}
From \eqref{fano}, \eqref{eq173}, \eqref{eq178} and \eqref{eq182}, it readily follows that $\exists$ $0\leq\alpha_1,\alpha_2,\alpha_3\leq1$ such that $\alpha_1+\alpha_2+\alpha_3=1$ and 

\begin{align}
\begin{cases}
H(\tilde{M}_1)&\leq \log\left(1+ \frac{\alpha_1P}{N_1}\right),\nonumber\\
H(\tilde{M}_2)&\leq \log\left(1+\frac{\alpha_2P}{N_2+\alpha_1P}\right),\nonumber\\
H(\tilde{M}_3)&\leq \log\left(1+\frac{\alpha_3P}{N_3+(\alpha_1+\alpha_2)P}\right).
\end{cases}
\end{align}
By replacing $H(\tilde{M}_k)$ with $\sum_{k\in\Kc\subseteq[k]}R_{\Kc}$ and $N_k$ with $\frac{1}{h_k}$ we obtain the result
\begin{align}
\begin{cases}
 R_1&\leq \log\left( 1+h_1\alpha_1P\right)\nonumber\\
R_2+R_{12}&\leq \log\left(\frac{1+h_2(\alpha_1+\alpha_2)P}{1+h_2\alpha_1P}\right)\nonumber\\
R_3+R_{13}+R_{23}+R_{123}&\leq \log\left(\frac{1+h_3P}{1+h_3(\alpha_1+\alpha_2)P}\right),
\end{cases}
\end{align}
\underline{Achievability} 
We prove that superposition coding achieves the upper bound. For $\Jc \subseteq \{1,2,3\}$, generate random sequences $x_{\Jc}(m_{\Jc})$, $m_{\Jc}\in[1:2^{nR_{\Jc}}]$ each i.i.d. $\Nc_{\CC}(0, \alpha_{\Jc}P)$, where $\sum_{\Jc\subseteq\{1,2,3\}}\alpha_{\Jc}=1$. We define $\tilde{x}_{k}(\tilde{m}_k)=\sum_{k\in\Jc\subseteq[k]} x_{\Jc}(m_{\Jc})$, where $\tilde{m}_{k}\in[1:2^{n\tilde{R}_{k}}]$. To transmit $\{m_{\Jc}\}_{\Jc \subseteq \{1,2,3\}}$, the encoder set $X=\sum_{\Jc\subseteq\{1,2,3\}}x_{\Jc}(m_{\Jc})=\tilde{x}_{1}(\tilde{m}_1)+\tilde{x}_{2}(\tilde{m}_2)+\tilde{x}_{3}(\tilde{m}_3)$.
 For decoding:
 \begin{itemize}
 \item Receiver $3$ jointly decodes $\{m_3,m_{13},m_{23},m_{123}\}$ by treating $\tilde{x}_{1}(\tilde{m}_1)$ and $\tilde{x}_{2}(\tilde{m}_2)$ as noise.
 \item Receiver $2$ uses successive cancellation by first decoding $\tilde{x}_{3}(\tilde{m}_3)$ and treating $\tilde{x}_{1}(\tilde{m}_1)$ and $\tilde{x}_{2}(\tilde{m}_2)$ as noise. It recovers $\{m_{23},m_{123}\}$. By subtracting off $\tilde{x}_{3}(\tilde{m}_3)$ and treating $\tilde{x}_{1}(\tilde{m}_1)$ as noise, user $2$ decodes $\tilde{x}_{2}(\tilde{m}_2)$ from which it recovers $\{m_2,m_{12}\}$.
  \item Receiver $1$ decodes $\tilde{x}_{3}(\tilde{m}_3)$ and recovers $\{m_{13},m_{123}\}$. Then, by successive cancellation it decodes $\tilde{x}_{2}(\tilde{m}_2)$ and recovers $\{m_{12}\}$. Finally it decodes $\tilde{x}_{1}(\tilde{m}_1)$ to recover $\{m_1\}$.
 \end{itemize}
\subsection{Proof of Proposition \ref{BaselineScheme}} \label{appendix:prop1}
The content delivery rate is:
\begin{align*}
\Rbl (K,P)= \phi_K \EE\left[ \log\left(1+ Ph_{\rm min} \right)\right],
\end{align*}
where $h_{\rm min}\defeq\min_{k=1, \dots, K} h_k$. Since $(h_k)_{k=1,...,K}$ are i.i.d. with distribution Exp($1$), $h_{\rm min}$ has distribution Exp($K$). Hence:
\begin{align*}
 	\EE\left[ \log\left(1+ Ph_{\rm min} \right)\right] &= \int_{0}^{+\infty} e^{-x} \log\left(1 + {P \over K}x\right) dx \\
 	&= e^{{K \over P}} E_1\left({K \over P}\right),
\end{align*}
which yields statement (i). 

When $K \to \infty$ we have $\phi_K \sim {K m \over 1-m}$ and
$$
\int_{0}^{+\infty} e^{-x} \log\left(1 + {P \over K}x\right) dx \sim {P \over K} \int_{0}^{+\infty} x e^{-x} dx = {P \over K},
$$
Replacing yields statement (ii): 
$$\Rbl(K,P) \sim {P m \over 1-m}.$$

When $P \to \infty$, ${K \over P} \to 0$. Since $E_1(x) \sim \log(1/x)$ for $x \to 0$ we obtain statement (iii):
$$\Rbl(K,P) \sim {\phi_K \log(P/K)} \sim  {\phi_K \log(P)}.$$

\subsection{Proof of proposition \ref{SelectionScheme}} \label{appendix:prop2}

We start by statement (i). The proof involves upper and lower bounding $\Rsc(K,P)$ by two expressions which are equivalent in the large $K$ regime. We define the complementary c.d.f. of $(h_k)_{k=1,...,K}$:
\begin{align*}
U(z) \defeq \sum_{k=1}^K \indic\{ h_k \ge z \},
\end{align*}
with $z \ge 0$. We further define the function: 
\begin{align*}
  g(z) \defeq {m \over 1-m} e^{-z} \log(1 + P z), z \ge 0.
\end{align*}
 It is noted that $g(0) = g(\infty) = 0$, and that $g$ is smooth. Differentiating, we have that $g$ is maximized at: 
 $$
 z^*(P) \defeq \arg\max_{z\geq0}g(z)={1 \over W(P)} - {1 \over P}
 $$
 so that:  
 $$
 \max_{z\geq0}g(z)=g(z^*)={m \over 1-m} e^{(\frac{1}{P}-{1 \over W(P)})} W(P).
 $$ 
The proof relies on the following equality:
\begin{align*}
\max_{k=1,...,K} \phi_k \log(1+h_{\pi_k} P) &= \max_{z \in \{h_1,...,h_K\}} \phi_{U(z)} \log(1+ P z) \\
	&= \max_{z \ge 0} \phi_{U(z)} \log(1+ P z).
\end{align*}
Indeed, function $z \mapsto \phi_{U(z)} \log(1+ P z)$ is left-continuous and is both continuous and increasing for all $z \not\in \{h_1,...,h_K\}$ so that it must attain its maximum in the set $\{h_1,...,h_K\}$.

\underline{Lower bound} Using the previous equality we obtain:
	\begin{align}
	\Rsc(K,P) &=\EE\left[ \max_{z\geq0}\phi_{U(z)} \log(1+ P z)\right] \nonumber\\
	&\geq \max_{z\geq0}\EE\left[\phi_{U(z)}\right] \log(1+ P z)\label{eq:jen1}\\
	&\geq \max_{z\geq0}\frac{m}{1-m}\EE\left[U(z)\right]\log(1+ P z) \label{eq:arg2}\\
	&= \max_{z\geq0}\frac{m}{1-m} K e^{-z}\log(1+ P z) \label{eq:arg3}\\
	&= K\max_{z\geq0}g(z)\nonumber\\
	&= {Km \over 1-m} e^{(\frac{1}{P}-{1 \over W(P)})} W(P),
\end{align}	  
where \eqref{eq:jen1} follows from Jensen's inequality; \eqref{eq:arg2} from the fact that $\phi_k \ge {m k \over 1-m}$ and \eqref{eq:arg3} from $\EE(U(z)) = K e^{-z}$.  

\underline{Upper bound} The upper bound is slightly more involved and involves a dominated convergence argument. Let us define: 
$$
G(K) = {1 \over K} \max_{z\geq0}\phi_{U(z)} \log(1+ P z),
$$
so that $\Rsc(K,P) = K \EE(G(K))$. We prove that:

(a) $\sup_{K} \EE(G(K)) <\infty$ and 

(b) $\limsup\limits_{K\rightarrow\infty} G(K) \stackrel{a.s.}{\le}  g(z^{*})$ 

If both (a) and (b) holds, applying the reverse Fatou lemma proves the announced result:
$$\limsup_{K\rightarrow\infty} {\Rsc(K,P) \over K} = \limsup_{K\rightarrow\infty} \EE(G(K)) \le g(z^*).$$

Consider claim (a). Since $\phi_k \le k$ $\forall k$:
\begin{align*}
	\phi_{U(z)} \log(1+ P z) &\le U(z) \log(1+ P z) \\
	&= \sum_{k=1}^K \indic\{ h_k \ge z\} \log(1+ P z) \\
	&\le \sum_{k=1}^K \log(1+ P h_k).
\end{align*}
The above holds for all $z$, and taking expectations:
\begin{align*}
\EE(G(K)) &= {1 \over K} \EE( \sup_{z \ge 0} \phi_{U(z)} \log(1+ P z)) \\
&\le \EE(\log(1+P h_k) < \infty.
\end{align*}
The above holds for all $K$, so that $\sup_{K} \EE(G(K)) <\infty$.

We turn to claim (b). Consider $y > z^*(P)$ fixed, whose value will be made precise afterwards.  Define intervals \newline $I_0 \defeq [0,y]$, $I_1 \defeq [y,\infty)$ and for $i\in\{0,1\}$, define:
\begin{align*}
G_i(K) = {1 \over K} \max_{z \in I_i} \{ \phi_{U(z)} \log(1+ P z) \},
\end{align*}
so that $G(K) = \max \left\lbrace G_0(K) , G_1(K)\right\rbrace $. To prove that $\limsup\limits_{K\rightarrow\infty} G(K) \le g(z^*)$ it is sufficient to prove that $\limsup\limits_{K\rightarrow\infty} G_i(K) \le g(z^*)$ for $i \in \{0,1\}$. 

Consider $G_0(K)$. For $z \in I_0$, we have $U(z) \ge U(y)$, so that:
$$
\phi_{U(z)} =  {U(z) \over T(m,U(z))} \le {U(z) \over T(m,U(y))}.
$$ 
Therefore:
\begin{align}
	G_0(K) \le {1 \over T(m,U(y))} \max_{z \in I_0} \left\{ {U(z) \over K} \log(1+ P z) \right\}.\label{eq:V0}
\end{align}
The Glivenko-Cantelli theorem states that:
$$
	\sup_{z \ge 0} \left|{U(z) \over K} - e^{-z} \right| \xrightarrow{a.s.} 0
$$
so that:
\begin{align}
	\max_{z \in I_0} & \left| {U(z) \over K} \log(1+ P z) - e^{-z} \log(1+ P z)\right| \nonumber \\
	\le & \max_{z \ge 0} \left|{U(z) \over K} - e^{-z}\right| \log(1 + P y) \xrightarrow{a.s.} 0\label{eq:lim1}
\end{align}
From the law of large numbers $U(y) \xrightarrow{a.s.} \infty$, so $T(m,U(y)) \xrightarrow{a.s.} {1-m \over m}$, together with \eqref{eq:V0} and \eqref{eq:lim1} it implies
\begin{align}
	\lim_{K\to\infty} \sup G_0(K) \stackrel{a.s.}{\le} \max_{0 \le z \le y} g(z). \label{limV0}
\end{align}
Now, consider $G_1(K)$. For $z \in I_1$, by the same argument as previously:
\begin{align*}
	{1 \over K} \phi_{U(z)} \log(1+ P z) &\le {1 \over K} \sum_{k=1}^K \indic\{h_k \ge z\}\log( 1 + P h_k) \\ 
	&\le {1 \over K} \sum_{k=1}^K \indic\{h_k \ge y\}\log( 1 + P h_k) \\
	&\stackrel{a.s.}{\to} \EE( \indic\{h_k \ge y\} \log( 1 + P h_k)),
\end{align*}
using the law of large numbers. \newline Since $y \to \EE( \indic\{h_k \ge y\} \log( 1 + P h_k))$ is decreasing and vanishes when $y \to \infty$, we may select $y$ large enough so that:
$$
	\EE( \indic\{h_k \ge y\} \log( 1 + P h_k)) \le g(z^*).
$$
Putting it together $\lim\sup G_1(K) \stackrel{a.s.}{\le} g(z^*)$ which is claim (b). This concludes the proof of statement (i).

\hspace{1cm}

Consider statment (ii), we have for $P \to \infty$:
	\begin{align*}
		{\max_k \phi_k \log(1+h_{\pi_k} P) \over \log(P)} \stackrel{a.s.}{\to} \phi_K.
	\end{align*}	
	Furthermore, 
	$$
			\sup_{P \ge 0} \EE\left({\max_k \phi_k \log(1+h_{\pi_k} P) \over \log(P)}\right) = \sup_{P \ge 0} {\Rsc(K,P) \over \log(P)} < \infty
	$$
	so by Lebesgue's theorem ${\Rsc(K,P) \over \log(P)} \to \phi_K$.

\fi

\end{document}